\numberwithin{equation}{section}
\newtheorem{lemma}{Lemma}
\newtheorem{theorem}{Theorem}
\begin{document}

\title{A Generalized Cheeger Inequality \thanks{Partially supported by the National Science Foundation under grant numbers CCF-1018463 and CCF-1149048.}
}
\author{Ioannis Koutis \\ {ioannis.koutis@upr.edu}  \and Gary Miller\\ {glmiller@cs.cmu.edu} \and Richard Peng\\ {rpeng@mit.edu} }

\maketitle

\begin{abstract}
\noindent The generalized conductance $\phi(G,H)$  between two graphs $G$ and $H$ on the same
vertex set $V$ is defined as
the ratio
$$
     \phi(G,H) = \min_{S\subseteq V} \frac{cap_G(S,\bar{S})}{ cap_H(S,\bar{S})},
$$
where $cap_G(S,\bar{S})$ is the total weight of the edges crossing from $S$ to $\bar{S}=V-S$.
We show that the minimum generalized eigenvalue
$\lambda(L_G,L_H)$ of the pair of Laplacians $L_G$ and $L_H$ satisfies
$$
     \lambda(L_G,L_H) \geq \phi(G,H) \phi(G)/8,
$$
where $\phi(G)$ is the usual conductance of $G$. A generalized cut that meets
this bound can be obtained from the generalized eigenvector
corresponding to $\lambda(L_G,L_H)$. The inequality complements a result
of Trevisan~\cite{Trevisan13} which shows that $\phi(G)$ cannot be replaced
by $\Theta(\phi(G,H))$ in the above inequality, unless the Unique Games Conjecture is false.
\end{abstract}

\section{The Cheeger inequality}

Let $G=(V,E,w)$ be a connected weighted graph. For $v\in V$ and $S\subseteq V$ we let
$$
   vol(v) = \sum_{(v,w)\in E} w(v,u) \textnormal{\qquad and \qquad}  vol(S) = \sum_{v \in S} vol(S).
$$

We will call $vol(v)$ the degree of vertex $v$.
We also denote by $cap(S,\bar{S})$ the total weight of edges
with exactly one endpoint in $S$ and one endpoint in $\bar{S}$.
The {\bf conductance} of $G$
is defined as
$$
   \phi(G) = \min_{S\subseteq V} \frac{cap(S,\bar{S})}{\min \{vol(S),vol(\bar{S})\}}.
$$

The Laplacian of $G$ is defined by
$$L(u,v) = -w(u,v) \textnormal{\qquad and \qquad} L(u,u) = \sum_{v\neq w} L(u,v).$$
The normalized Laplacian $\tilde{L}$ of $G$ is the matrix $D^{-1/2} L D^{-1/2}$ where $D$
is the diagonal of $L$. It is well understood that the normalized Laplacian is positive
semi-definite with a unique zero eigenvalue. If $\lambda_2$ is its second eigenvalue $\lambda_2$,
then the Cheeger inequality relates it to $\phi(G)$ as follows:
$$
    \lambda_2 \geq \phi(G)^2/2.
$$

At least one proof of the Cheeger inequality due to Mihail~\cite{DBLP:conf/focs/Mihail89} actually
shows something stronger, namely that for any
vector $y\bot Null(\tilde{L}_G)$, we can find a set $S_y$ such that
\begin{equation} \label{eq:anyvector}
     y^T \tilde{L}_G y > \left( \frac{cap_G(S_y,\bar{S_y})}{\min \{vol(S_y),vol(\bar{S_y})\}} \right)^2/2.
\end{equation}
The cut can be found by letting $S_y$ to consist
of the vertices corresponding to the $k$ smallest entries of $y$,
for some $1\leq k \leq n$.

\section{Generalized cuts for graph pairs}
We define the \textbf{generalized conductance} between $G$ and $H$ as follows:
$$
     \phi(G,H) = \min_{S\subseteq V} \frac{cap_G(S,\bar{S})}{ cap_H(S,\bar{S})}.
$$

To motivate this definition, we observe that the sparsest cut
problem can be captured within a factor of $2$ as a generalized
cut problem\footnote{sometimes called the non-uniform sparsest cut problem.} between two graphs. To this end let us
define the \textbf{demand graph}  $D_G=(V,E',w')$ with
every edge being present in $E'$ and the weights specified by
$$
   w'(u,v) = \frac{vol(u)vol(v)}{vol(V)}.
$$
Let $S\subseteq V$. Observe that by construction we have
$$
    cap_{D_G}(S,\bar{S}) = \frac{vol(S)vol(\bar{S})}{vol(V)}.
$$
Note now that
$$
    \min \{vol(S),vol(\bar{S})\} \geq \frac{vol(S)vol(\bar{S})}{vol(V)} \geq
    \min \{vol(S),vol(\bar{S})\}/2.
$$
From this it can be seen that
$$
     \frac{\phi(G)}{2} \leq  \phi(G,D_G)\leq \phi(G).
$$

A number of other problems can be viewed as generalized cut problems.

For example, consider the {\bf isoperimetric number} defined by:
$$
     h(G) = \min_{S\subseteq V} \frac{cap_G(S,\bar{S})}{\min \{|S|,|\bar{S}|\}}.
$$
If $K_n$ is the complete graph on $n$ vertices with edges weighted
by $1/n$, i.e. the identity over the space of sets orthogonal to
the constant vectors, it can be verified that we have
$$
   \frac{\phi(G,K_n)}{2} \leq h(G) \leq \phi(G,K_n).
$$

Another example is the min $s$-$t$ cut problem which looks for a cut of
minimum value among all possible cuts that separate $s$ and $t$. If
we denote that value by $\mu_{s,t}$, and let $G_{s,t}$ be the Laplacian of the edge $(s,t)$,
we have
$$
    \mu_{s,t} = \phi(G,G_{s,t}).
$$

\subsection{Cuts and Laplacians}

The value of a cut between $S$ and $\bar{S}$ can be expressed in terms of the
graph Laplacian as:
$$
       cap_G(S,\bar{S}) = x^T_S L_G x_S,
$$
where $x_S$ is characteristic vector of $S$, i.e.
the vector with ones in its entries
corresponding to $S$ and zeros in all other entries.

It follows that the generalized conductance can be expressed
as an optimization problem over the discrete 0-1 vectors:
$$
    \phi(G,H) = \min_{\substack{x \in \{0,1\}^n }} \frac{x^T L_G x}{x_T L_H x}.
$$
One can relax this discrete problem
over the real numbers:
$$
    \lambda(G,H) = \min_{\substack{x \in {\mathbb R}^n \\ x^Td =0}} \frac{x^T L_G x}{x_T L_H x}.
$$
Here $d$ is the vector containing the degrees of the vertices in $G$. The constraint $x^Td =0$
can be considered as a `normalization' constraint that fixes one representative for all vectors of the
form $x_c = y+c1$ that achieve the same ratio. That is, the constraint
doesn't change the minimum value of the ratio and we add it here because it will be useful in the sequel.
It is well known that $\lambda(G,H)$ is the first non-trivial generalized eigenvalue
for the problem $L_G x = \lambda L_H x$.

Note that the minimum eigenvalue of the normalized Laplacian can also be seen
as the first non-trivial generalized eigenvalue for the problem $L x= \lambda D x$
and thus as a continuous relaxation of the corresponding optimization problem.
We aim to prove a similar characterization for the generalized conductance of
any pair of graphs.

\section{Generalized Cheeger Inequality}

We begin with two Lemmas.

\begin{lemma} \label{th:splitting}
For all $a_i, b_i > 0$ we have
    $$\frac{\sum_i {a_i}}{\sum_i b_i} \geq \min_i \left\{ \frac{a_i}{b_i}\right\}.$$
\end{lemma}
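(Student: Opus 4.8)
The plan is to use the elementary \emph{mediant} observation that a sum of numerators over a sum of denominators can never fall below the smallest individual ratio. First I would set $m := \min_i \{a_i/b_i\}$, which is well defined since the index set is finite and every $b_i > 0$. By the definition of $m$, for each index $i$ we have $a_i/b_i \ge m$, and since $b_i > 0$ we may clear the denominator without reversing the inequality to obtain $a_i \ge m\, b_i$.

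Next I would sum these inequalities over all $i$, giving $\sum_i a_i \ge m \sum_i b_i$. Because each $b_i$ is strictly positive, $\sum_i b_i > 0$, so dividing both sides by $\sum_i b_i$ preserves the direction of the inequality and yields $\frac{\sum_i a_i}{\sum_i b_i} \ge m = \min_i \{a_i/b_i\}$, which is exactly the claim.

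The only steps needing any care — and they are minor — are the two uses of positivity: $b_i > 0$ is what lets us pass from $a_i/b_i \ge m$ to $a_i \ge m\, b_i$, and $\sum_i b_i > 0$ is what lets us divide at the end. There is no genuine obstacle; the lemma is a one-line fact, and its purpose is simply to let later arguments convert a global bound on a ratio of sums (such as a Rayleigh-type quotient $x^T L_G x / x^T L_H x$ split across the two sides of a cut or across subgraphs) into a bound on the minimum of the per-term ratios.
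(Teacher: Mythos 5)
Your proof is correct and is the standard mediant-inequality argument; the paper itself states Lemma~\ref{th:splitting} without proof, evidently regarding exactly this one-line computation as routine. Nothing further is needed.
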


\begin{lemma} \label{th:DvsDG}
Let $G$ be a graph, $d$ be the vector containing the degrees of the vertices,
and $D$ be corresponding diagonal matrix. For
all vectors $x$ where $x^Td = 0$ we have
$$
    x^T D x = x^T L_{D_G} x,
$$
where $D_G$ is the demand graph for $G$.
\end{lemma}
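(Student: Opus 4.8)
The plan is to expand both sides as explicit quadratic forms in the entries of $x$ and compare, deferring the use of the hypothesis $x^T d = 0$ to the very last step. Writing $d_u = vol(u)$ for brevity, the right-hand side $x^T L_{D_G} x$ is the Laplacian quadratic form of the demand graph, which by definition equals $\sum_{u<v} w'(u,v)(x_u - x_v)^2$ with $w'(u,v) = d_u d_v / vol(V)$. I would rewrite this as a sum over all ordered pairs, namely
$$
   x^T L_{D_G} x = \frac{1}{2\,vol(V)} \sum_{u,v} d_u d_v (x_u - x_v)^2,
$$
so as not to have to carry the $u<v$ restriction through the algebra.

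Next I would expand $(x_u - x_v)^2 = x_u^2 - 2 x_u x_v + x_v^2$ and distribute the double sum into three pieces. The first piece is $\sum_{u,v} d_u d_v x_u^2 = \big(\sum_u d_u x_u^2\big)\big(\sum_v d_v\big) = vol(V)\, (x^T D x)$; the third piece is the same by the symmetry $u \leftrightarrow v$; and the middle piece is $-2\big(\sum_u d_u x_u\big)\big(\sum_v d_v x_v\big) = -2 (d^T x)^2$. Collecting the three contributions and dividing by $2\,vol(V)$ gives the identity
$$
   x^T L_{D_G} x = x^T D x - \frac{(d^T x)^2}{vol(V)},
$$
valid for \emph{every} vector $x$. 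Imposing the hypothesis $x^T d = 0$ annihilates the last term and yields $x^T D x = x^T L_{D_G} x$, as claimed.

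There is no real obstacle here; the only points requiring care are the factor-of-two bookkeeping between the ``ordered pairs'' and ``unordered edges'' conventions for the Laplacian form, and checking that $vol(V) = \sum_v d_v$ cancels exactly. It is worth recording the stronger identity $x^T L_{D_G} x = x^T D x - (d^T x)^2/vol(V)$ as part of the proof, since it shows the constraint $x^T d = 0$ is precisely the normalization already introduced for $\lambda(G,H)$, and this is what makes $D_G$ a legitimate stand-in for $D$ in the generalized Cheeger argument to follow.
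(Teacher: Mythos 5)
Your proof is correct and rests on the same identity the paper uses, namely $L_{D_G} = D - dd^T/vol(V)$ (equivalently, $x^T L_{D_G} x = x^T D x - (d^Tx)^2/vol(V)$ for all $x$); the paper simply asserts this matrix identity ``by definition'' while you verify the corresponding quadratic-form identity by expanding the sum over pairs. Your version supplies the bookkeeping the paper omits, but it is the same argument.
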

\begin{proof}
Let $d$ be the vector consisting of the entries
along the diagonal of $D$. By definition, we have
$$
   L_{D_G} = D - \frac{dd^T}{vol(V)}.
$$
The lemma follows.
\end{proof}

We prove the following theorem.
\begin{theorem}
\label{thm:generalizedcheeger}

Let $G$ and $H$ be any two weighted graphs and $D$ be
the vector containing the degrees of the vertices in $G$.
F any vector $x$ such that $x^Td =0$, we have

\[
\frac{x^TL_Gx}{x^TL_Hx} \geq \phi(G,D_G) \cdot \phi(G, H)/4,
\]
where $D_G$ is the demand graph of $G$
\end{theorem}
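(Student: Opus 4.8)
The goal is to lower-bound the Rayleigh quotient $x^T L_G x / x^T L_H x$ for an arbitrary test vector $x \perp d$. The natural strategy is to route through the demand graph $D_G$, since $\phi(G,D_G)$ is (up to the factor $2$) the ordinary conductance $\phi(G)$, and we already have Mihail's strengthened Cheeger inequality \eqref{eq:anyvector} available for $G$. First I would write the quotient as a product,
\[
\frac{x^T L_G x}{x^T L_H x} = \frac{x^T L_G x}{x^T L_{D_G} x} \cdot \frac{x^T L_{D_G} x}{x^T L_H x},
\]
which is legitimate as long as $x^T L_{D_G} x \neq 0$; by Lemma~\ref{th:DvsDG} this equals $x^T D x$, which is positive unless $x=0$ (using $x\perp d$ and connectivity of $G$). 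So the problem splits into bounding the two factors separately.

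For the second factor: I would apply the sweep construction of \eqref{eq:anyvector} to the vector $x$ (viewed relative to the normalized Laplacian of $G$) to extract a threshold set $S = S_x$. The key point is that the quantity controlled by the sweep is exactly of the form $cap_G(S,\bar S)/\min\{vol(S),vol(\bar S)\}$, i.e. the $G$-conductance of $S$, which is at least $\phi(G)$, hence at least $\phi(G,D_G)$ (by the displayed chain $\phi(G)/2 \le \phi(G,D_G)\le\phi(G)$, only the lower direction being needed — actually we need $\phi(G)\ge\phi(G,D_G)$, which is the easy inequality). More precisely, I expect the sweep to yield that for \emph{each} threshold level $k$, the ratio $cap_G(S_k,\bar S_k)/cap_{D_G}(S_k,\bar S_k)$ is related to $x^T L_G x / x^T D x$, and then Lemma~\ref{th:splitting} (the mediant inequality) lets one pass from the aggregated Rayleigh quotient to the minimum over threshold cuts, which is at least $\phi(G,D_G)$. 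This shows $x^T L_G x / x^T L_{D_G} x \ge$ something like $\phi(G,D_G)^2$ or, after being careful, gives us one factor of $\phi(G,D_G)$ together with an extra $x^T L_G x / (\text{something})$ that feeds the first factor.

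For the first factor, $x^T L_{D_G} x / x^T L_H x = x^T D x / x^T L_H x$: here I would again use the sweep set $S = S_x$ produced above. On a threshold cut, $x_S^T L_H x_S = cap_H(S,\bar S)$ and $x_S^T L_G x_S = cap_G(S,\bar S)$, so the discrete ratio on $S$ is at least $\phi(G,H)$ by definition of generalized conductance. The work is to relate the \emph{continuous} ratio $x^T L_G x / x^T D x$ back to these discrete cut ratios along the sweep — this is precisely what Mihail's argument does, expressing $x^T L_G x$ and $x^T D x$ (the latter via Lemma~\ref{th:DvsDG} as a sum of contributions $cap_{D_G}(S_k,\bar S_k)$ telescoped over thresholds) as sums over threshold levels, and invoking Lemma~\ref{th:splitting}.

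I anticipate the main obstacle is bookkeeping the two sweeps coherently: one wants a \emph{single} threshold set $S_x$ that simultaneously certifies $cap_G(S_x,\bar S_x)/cap_{D_G}(S_x,\bar S_x) \ge$ (roughly) $(x^TL_Gx/x^TDx)/\phi(G)$ — Mihail's inequality in the form \eqref{eq:anyvector} — \emph{and} $cap_G(S_x,\bar S_x)/cap_H(S_x,\bar S_x) \ge \phi(G,H)$, and then combine these with the elementary inequality $cap_G/cap_H = (cap_G/cap_{D_G})\cdot(cap_{D_G}/cap_H)$ and the volume bounds $\min\{vol(S),vol(\bar S)\} \ge cap_{D_G}(S,\bar S) \ge \min\{vol(S),vol(\bar S)\}/2$. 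Writing $r = x^TL_Gx/x^TDx$, the sweep gives a level where $cap_G(S,\bar S) \ge \tfrac12 r \cdot \min\{vol(S),vol(\bar S)\}$ roughly (this is the content of the one-sided Cheeger sweep, up to the square-root subtlety I'll need to track), hence $cap_G(S,\bar S) \ge \tfrac{r}{2}\,cap_{D_G}(S,\bar S)$; meanwhile $cap_H(S,\bar S) \le cap_G(S,\bar S)/\phi(G,H)$, and also $r = x^TL_Gx/x^TL_{D_G}x \ge \phi(G,D_G)\cdot\phi(G,H) / (\text{const})$ must be re-derived by feeding the chain back into itself. The delicate part is making the constant come out to exactly $4$ rather than $8$ or $16$ — I would expect to use the two-sided volume bound $cap_{D_G}(S,\bar S)\in[\tfrac12,1]\cdot\min\{vol(S),vol(\bar S)\}$ to absorb one factor of $2$, and the structure of the sweep (choosing the better of $S_k$ and its complement) to keep from losing more.
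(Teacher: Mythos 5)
There is a genuine gap here: the multiplicative decomposition $\frac{x^TL_Gx}{x^TL_Hx}=\frac{x^TL_Gx}{x^TL_{D_G}x}\cdot\frac{x^TL_{D_G}x}{x^TL_Hx}$ that your whole plan rests on cannot deliver the theorem. To extract anything from it you must lower-bound each factor separately (for a single fixed $x$ the identity is vacuous), and both bounds fail. For the first factor, $x^TL_Gx/x^TL_{D_G}x=x^TL_Gx/x^TDx$ is the ordinary normalized Rayleigh quotient, and the best lower bound available (Mihail/Cheeger, inequality (\ref{eq:anyvector})) is $\phi(G)^2/2$ --- quadratic in $\phi(G)$, whereas the theorem needs only one linear factor of $\phi(G,D_G)$; this loss is real, e.g.\ for the cycle $C_n$ the quotient can be $\Theta(1/n^2)$ while $\phi(G,D_G)=\Theta(1/n)$. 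For the second factor, $x^TDx/x^TL_Hx$ is $1/\lambda_{\max}(L_H,D)$ in the worst case and has no lower bound in terms of $\phi(G,H)$; your justification for it --- that a threshold cut satisfies $cap_G(S,\bar{S})/cap_H(S,\bar{S})\geq\phi(G,H)$ --- is true of \emph{every} cut by definition of $\phi(G,H)$ and says nothing about the continuous ratio $x^TDx/x^TL_Hx$. A concrete sanity check: take $H=G=C_n$, so the left-hand side of the theorem equals $1$ and the right-hand side is $\Theta(1/n)$; your factor-by-factor bound produces at best $\Theta(1/n^2)\cdot O(1)$, so the scheme provably proves something strictly weaker than the statement. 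The closing remarks about ``feeding the chain back into itself'' are where this circularity surfaces: the missing inequality is essentially the theorem itself.

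The step you are missing is the paper's single Cauchy--Schwarz applied to the $G$-energy, which manufactures a quantity that telescopes against \emph{both} $cap_H$ and $cap_{D_G}$. Splitting edges by the sign pattern of $x$ into $E_G^{same}$ and $E_G^{dif}$, one has $x^TL_Gx\geq\sum_{uv\in E_G^{same}}w_G(u,v)(x_u-x_v)^2+\sum_{uv\in E_G^{dif}}w_G(u,v)(x_u^2+x_v^2)$; multiplying by $\sum_{uv\in E_G^{same}}w_G(u,v)(x_u+x_v)^2+\sum_{uv\in E_G^{dif}}w_G(u,v)(x_u^2+x_v^2)$, which is at most $2x^TDx=2x^TL_{D_G}x$ by Lemma~\ref{th:DvsDG}, Cauchy--Schwarz gives $x^TL_Gx\cdot 2x^TL_{D_G}x\geq C^2$ where $C=\sum_{uv\in E_G^{same}}w_G(u,v)|x_u^2-x_v^2|+\sum_{uv\in E_G^{dif}}w_G(u,v)(x_u^2+x_v^2)$. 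The point of passing from $(x_u-x_v)^2$ to $|x_u^2-x_v^2|$ is that $C$ telescopes exactly over the sweep as $\sum_i(x_{u_i}^2-x_{u_{i-1}}^2)\,cap_G(S_i,\bar{S_i})$, while $x^TL_Hx$ and $x^TL_{D_G}x$ are each upper-bounded by twice the analogous telescoping sums with $cap_H$ and $cap_{D_G}$; Lemma~\ref{th:splitting} then yields $C/x^TL_Hx\geq\phi(G,H)/2$ and $C/x^TL_{D_G}x\geq\phi(G,D_G)/2$ (this is Lemma~\ref{lem:abssqr}, applied once with $H$ and once with $D_G$), so each conductance enters linearly and the product $\frac{C}{x^TL_Hx}\cdot\frac{C}{x^TL_{D_G}x}$ assembles into the claimed bound. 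Your proposal correctly identifies the ingredients (the sweep, the mediant inequality, Lemma~\ref{th:DvsDG}) but never constructs this shared telescoping quantity, and without it the two conductance factors cannot be obtained simultaneously and linearly.
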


Let $V^{-}$ denote the set of $u$ such that $x_u \leq 0$ and
$V^{+}$ denote the set such that $x_u > 0$.
Then we can divide $E_G$ into two sets: $E_G^{same}$ consisting
of edges with both endpoints in $V^{-}$ or $V^{+}$, and
$E_G^{dif}$ consisting of edges with one endpoint in each.
In other words:
\begin{align*}
& E_G^{dif} = \delta_G\left(V^{-}, V^{+}\right),\text{ and}\\
& E_G^{same} = E_G \setminus E_G^{dif}.
\end{align*}

We also define $E_H^{dif}$ and $E_H^{same}$ similarly.

We first show a lemma which is identical to one used in the proof
of Cheeger's inequality~\cite{chung1}:

\begin{lemma} \label{lem:abssqr}
Let $G$ and $H$ be any two weighted graphs on the same vertex set $V$
partitioned into $V^{-}$ and $V^{+}$. For any vector $x$ we have
\[
\frac{ \sum_{uv \in E_G^{same}}w_G\left(u,v\right)\left|x_u^2-x_v^2\right| +
\sum_{uv \in E_G^{dif}}w_G(u, v) \left(x_u^2 + x_v^2\right) }{x^T L_H x}
\geq \frac{\phi(G, H)}{2} .
\]
\end{lemma}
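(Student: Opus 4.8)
The plan is to reduce the left-hand side to a sweep-cut estimate over a rescaled potential, exactly as in the classical proof of Cheeger's inequality. First I would introduce the vector $y$ with $y_u = x_u|x_u|$; that is, $y_u = x_u^2$ for $u\in V^{+}$ and $y_u = -x_u^2$ for $u\in V^{-}$. For an edge $uv\in E_G^{same}$ the two endpoints give $y$-values of the same sign, so $|y_u-y_v| = |x_u^2-x_v^2|$, while for an edge $uv\in E_G^{dif}$ the signs differ and $|y_u-y_v| = x_u^2+x_v^2$. Consequently the numerator in the statement is \emph{exactly} $\sum_{uv\in E_G} w_G(u,v)\,|y_u-y_v|$, and the identical rewriting applies with $H$ in place of $G$.

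Next I would sort the vertices as $v_1,\dots,v_n$ with $y_{v_1}\le\cdots\le y_{v_n}$, put $t_i = y_{v_{i+1}}-y_{v_i}\ge 0$, and consider the threshold cuts $S_i = \{v_1,\dots,v_i\}$. Since an edge between $v_j$ and $v_k$ with $j<k$ is separated by exactly the cuts $S_j,\dots,S_{k-1}$, telescoping gives
\[
\sum_{uv\in E_G} w_G(u,v)\,|y_u-y_v| \;=\; \sum_{i=1}^{n-1} t_i\, cap_G(S_i,\bar{S_i}),
\]
and the same identity holds for $H$.

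The only genuinely analytic ingredient is the edge-wise bound $(x_u-x_v)^2 \le 2\,|y_u-y_v|$. If $x_u,x_v$ have the same sign then $|y_u-y_v| = |x_u-x_v|\,|x_u+x_v| \ge (x_u-x_v)^2$ because $|x_u+x_v|\ge|x_u-x_v|$ in that case; if they have opposite signs then $(x_u-x_v)^2 = x_u^2+x_v^2+2|x_u||x_v| \le 2(x_u^2+x_v^2) = 2|y_u-y_v|$ by the AM--GM inequality. Summing this over $E_H$ and applying the telescoping identity for $H$ yields $x^T L_H x \le 2\sum_{i} t_i\, cap_H(S_i,\bar{S_i})$.

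Finally I would assemble the bound: the numerator equals $\sum_i t_i\, cap_G(S_i,\bar{S_i})$ and the denominator is at most $2\sum_i t_i\, cap_H(S_i,\bar{S_i})$, so the ratio is at least
\[
\frac12\cdot\frac{\sum_i t_i\, cap_G(S_i,\bar{S_i})}{\sum_i t_i\, cap_H(S_i,\bar{S_i})} \;\ge\; \frac12\,\min_i \frac{cap_G(S_i,\bar{S_i})}{cap_H(S_i,\bar{S_i})} \;\ge\; \frac{\phi(G,H)}{2},
\]
where the middle inequality is Lemma~\ref{th:splitting}. I expect the difficulty to be bookkeeping rather than depth: the substitution $y_u = x_u|x_u|$ must be chosen precisely so the $E_G^{dif}$ edges produce the $x_u^2+x_v^2$ terms appearing in the statement, and one must dispose of the degenerate cuts with $cap_H(S_i,\bar{S_i})=0$ before invoking Lemma~\ref{th:splitting} --- discarding them is harmless, since $x^T L_H x>0$ guarantees the remaining index set is nonempty and dropping the corresponding $cap_G$ terms only decreases the numerator.
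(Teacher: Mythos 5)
Your proof is correct, and it reaches the paper's bound by a somewhat cleaner route. The analytic core is identical to the paper's: the edge-wise bounds $|x_u^2-x_v^2|\ge(x_u-x_v)^2$ on same-sign edges and $(x_u-x_v)^2\le 2(x_u^2+x_v^2)$ on crossing edges, followed by Lemma~\ref{th:splitting}. The organization, however, differs. The paper never introduces the signed potential $y_u=x_u|x_u|$; instead it splits both the numerator and the (already bounded) denominator into a $V^{-}$ part and a $V^{+}$ part, invokes Lemma~\ref{th:splitting} once to reduce to a one-sided inequality (its equation~(\ref{eq:toprove})), and then runs a separate sweep over $|x_u|$ within that single side. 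You collapse these two stages into one global sweep over the thresholds of $y$, with a single telescoping identity $\sum_{uv}w(u,v)|y_u-y_v|=\sum_i t_i\,cap(S_i,\bar{S_i})$ applied simultaneously to $G$ and $H$. This buys two things: the two-sided case analysis disappears (and with it the bookkeeping in which each crossing edge is charged $x_u^2$ to one side and $x_v^2$ to the other), and the candidate cuts are exhibited explicitly as the $n-1$ threshold cuts of $y$, which is exactly the form the paper's Section~4 invokes for the algorithmic statement. Your remark about discarding indices with $cap_H(S_i,\bar{S_i})=0$ before applying Lemma~\ref{th:splitting} (which is stated only for strictly positive $a_i,b_i$) addresses a degeneracy the paper glosses over, and your handling of it is correct.
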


\begin{proof}

We begin with a few algebraic identities:

Note that $2x_u^2+2x_v^2-(x_u - x_v)^2=(x_u+x_v)^2 \geq 0$
gives:
\[
\left(x_u - x_v\right)^2 \leq 2x_u ^2 + 2x_v^2.
\]

Also, suppose $uv \in E_H^{same}$ and without loss of generality that $|x_u| \geq |x_v|$.
Then letting $y = |x_u| - |x_v|$, we get:
\begin{eqnarray*}
|x_u^2-x_v^2| & = & \left(\left|x_v\right| + y\right) ^2 - \left|x_v\right|^2 \\
& = & y^2 + y |x_v|\\
& \geq & y^2 = \left(x_u - x_v\right)^2.
\end{eqnarray*}
The last equality follows because $x_u$ and $x_v$ have the same sign.

We then use the above inequalities to decompose the $x^T L_H x$ term.
\begin{eqnarray}
 x^T L_H   & = & \sum_{uv \in E_H^{same}}w_H(u,v)\left(x_u -x_v\right)^2
+ \sum_{uv \in E_H^{dif}}w_H(u, v)\left(x_u - x_v\right)^2 \nonumber \\
& \leq & \sum_{uv \in E_H^{same}} w_H(u, v) \left(x_u -x_v\right)^2
+ \sum_{uv \in E_H^{dif}} w_H(u, v) \left(2x_u^2 + 2x_v^2\right) \nonumber \\
& \leq & 2 \left( \sum_{uv \in E_H^{same}}  w_H(u, v) \left(x_u -x_v\right)^2
+ \sum_{uv \in E_H^{dif}} w_H(u, v) \left(x_u^2 + x_v^2\right) \right) \nonumber \\
& \leq & 2\left( \sum_{uv \in E_H^{same}}  w_H(u, v) \left|x_u^2 - x_v^2\right|
+ \sum_{uv \in E_H^{dif}} w_H(u, v) \left(x_u^2 + x_v^2\right) \right).
\end{eqnarray}

We can now decompose the summation further into parts for
$V^{-}$ and $V^{+}$:
\begin{align*}
& \sum_{uv \in E_G^{same}}w_G\left(u,v\right)\left|x_u^2-x_v^2\right| +
\sum_{uv \in E_G^{dif}}w_G\left(u, v\right) \left(x_u^2 + x_v^2\right) \\
= &\sum_{u \in V^{-}, v \in V^{-}}w_G\left(u,v\right)\left|x_u^2-x_v^2\right|
+ \sum_{u \in V^{-}, v \in V^{+}}w_G\left(u,v\right) x_u^2 \\
& + \sum_{u \in V^{+}, v \in V^{+}}w_G\left(u,v\right)\left|x_u^2-x_v^2\right|
+ \sum_{u \in V^{-}, v \in V^{+}}w_G\left(u,v\right) x_u^2.
\end{align*}

Doing the same for $ \sum_{uv \in E_H^{same}}  w_H(u, v) |x_u^2 - x_v^2|
+ \sum_{uv \in E_H^{dif}} w_H(u, v) (x_u^2 + x_v^2)$ we get:
\begin{align*}
& \frac{ \sum_{uv \in E_G^{same}}w_G(u,v)\left|x_u^2-x_v^2\right| +
\sum_{uv \in E_G^{dif}}w_G(u, v) \left(x_u^2 + x_v^2\right) }{x^T L_H x} \\
\geq & \min \left\{
\frac{\sum_{u \in V^{-}, v \in V^{-}}w_G(u,v)\left|x_u^2-x_v^2\right|
+ \sum_{u \in V^{-}, v \in V^{+}}w_G(u,v) x_u^2}
{\sum_{u \in V^{-}, v \in V^{-}}w_H(u,v)\left|x_u^2-x_v^2\right|
+ \sum_{u \in V^{-}, v \in V^{+}}w_H(u,v) x_u^2},\right.\\
&\qquad \left.\frac{\sum_{u \in V^{+}, v \in V^{+}}w_G(u,v)\left|x_u^2-x_v^2\right|
+ \sum_{u \in V^{-}, v \in V^{+}}w_G(u,v) x_v^2}
{\sum_{u \in V^{+}, v \in V^{+}}w_H(u,v)\left|x_u^2-x_v^2\right|
+ \sum_{u \in V^{-}, v \in V^{+}}w_H(u,v) x_v^2}
\right\}.
\end{align*}
The inequality comes from applying of Lemma \ref{th:splitting}.

By symmetry in $V^{-}$ and $V^{+}$, it suffices to show that
\begin{equation} \label{eq:toprove}
 \frac{\sum_{u\in V^{-}, v\in V^{-}}w_G\left(u, v\right)\left|x_u^2 -x_v^2\right| +
 \sum_{u\in V^{-}, v\in V^{+} }w_G(u, v) x_u^2}{
 \sum_{u\in V^{-}, v\in V^{-}}w_G\left(u, v\right)\left|x_u^2-x_v^2\right|
+ \sum_{u \in V^{-}, v \in V^{+}}w_G\left(u, v\right) x_u^2 }
\geq \phi(G, H).  \\
\end{equation}

We sort the $x_u$ in increasing order of $|x_u|$ into
such that $x_{u_1} \geq \ldots  \geq  x_{u_k}$, and let $S_k=\{x_{u_1},\ldots,x_{u_k}\}$. We have
\begin{align*}
& \sum_{u\in V^{-}, v\in V^{-}}w_G(u, v)\left|x_u^2 -x_v^2\right| +
 \sum_{u\in V^{-}, v\in V^{+} }w_G(u, v) x_u^2
=  \sum_{i=1 \dots k} \left(x_{u_i}^2 - x_{u_{i-1}}^2\right) cap_G\left(S_k, \bar{S_k}\right),
\end{align*}
and
\begin{align*}
& \sum_{u\in V^{-}, v\in V^{-}}w_H(u, v)\left|x_u^2 -x_v^2\right| +
 \sum_{u\in V^{-}, v\in V^{+} }w_H(u, v) x_u^2
 = \sum_{i=1 \dots k} \left(x_{u_i}^2 - x_{u_{i-1}}^2\right) cap_H\left(S_k, \bar{S_k}\right).
\end{align*}

Applying Lemma \ref{th:splitting} we have
\[
 \frac{\sum_{u\in V^{-}, v\in V^{-}}w_G(u, v)|x_u^2 -x_v^2| +
 \sum_{u\in V^{-}, v\in V^{+} }w_G(u, v) x_u^2}
 { \sum_{u\in V^{-}, v\in V^{-}}w_G\left(u, v\right)\left|x_u^2-x_v^2\right|
+ \sum_{u \in V^{-}, v \in V^{+}}w_G\left(u, v\right) x_u^2 }
\geq \min_k \frac{cap_H\left(S_G, \bar{S_i}\right)}{cap_H\left(S_i, \bar{S_i}\right)}
\geq \phi(G, H),
\]
where the second inequality is by definition of $\phi(G,H)$. This proves equation \ref{eq:toprove} and the Lemma follows.
\end{proof}

We now proceed with the proof of the main Theorem.

\begin{proof}

We have
\begin{eqnarray}
{x^TL_Gx} & = & \sum_{uv \in E_G}w_G(u,v)(x_u-x_v)^2 \nonumber\\
& = & \sum_{uv \in E_G^{same}}w_G(u,v)(x_u-x_v)^2
+ \sum_{uv \in E_G^{dif}}w_G(u,v)(x_u-x_v)^2 \nonumber \\
& \geq & \sum_{uv \in E_G^{same}}w_G(u,v)(x_u-x_v)^2
+ \sum_{uv \in E_G^{dif} }w_G(u,v) (x_u^2 + x_v^2). \nonumber \\
\end{eqnarray}
The last inequality follows by $x_ux_v \leq 0$ as $x_u \leq 0$ for all
$u \in V^{-}$ and $x_v \geq 0$ for all $v \in V^{+}$.

We multiply both sides of the inequality by
$$\sum_{uv \in E_G^{same}} w_G(u,v) (x_u + x_v)^2
 + \sum_{uv \in E_G^{dif}} w_G(u, v) (x_u^2 + x_v^2).$$

We have
\begin{eqnarray*}
&\left( \sum_{uv \in E_G^{same}}w_G(u,v)(x_u-x_v)^2
+ \sum_{uv \in E_G^{dif} }w_G(u,v) (x_u^2 + x_v^2)  \right)\\
& \cdot \left( \sum_{uv \in E_G^{same}} w_G(u,v) (x_u + x_v)^2
 + \sum_{uv \in E_G^{dif}} w_G(u, v) (x_u^2 + x_v^2) \right) \\
\geq & \left(  \sum_{uv \in E_G^{same}} |x_u - x_v| |x_u + x_v|
+ \sum_{uv \in E_G^{dif}} w_G(u, v) (x_u^2 + x_v^2) \right)^2\\
= & \left(  \sum_{uv \in E_G^{same}} |x_u^2 - x_v^2|
+ \sum_{uv \in E_G^{dif}} w_G(u, v) (x_u^2 + x_v^2) \right)^2.
\end{eqnarray*}

Furthermore, notice that $(x_u + x_v)^2 \leq 2x_u ^2 + 2x_v ^2$ since
$2x_u^2 + 2x_v^2 - (x_u + x_v)^2 = (x_u - x_v)^2 \geq 0$.
So, we have
\begin{align*}
& \sum_{uv \in E_G^{same}} w_G(u,v) (x_u + x_v)^2
 + \sum_{uv \in E_G^{dif}} w_G(u, v) (x_u^2 + x_v^2) \\
\leq & 2 \left( \sum_{uv \in E_G^{same}} w_G(u,v) (x_u^2 + x_v^2)
 + \sum_{uv \in E_G^{dif}} w_G(u, v) (x_u^2 + x_v^2) \right)\\
 & = 2 x^T Dx \leq 4 x^T L_{D_G} x,
\end{align*}
where $D$ is the diagonal of $L_G$ and the last inequality
comes from Lemma~\ref{th:DvsDG}. Combining the last two inequalities we get:
\begin{eqnarray*}
\frac{x^TL_Gx}{x^TL_Hx} \geq & \frac{1}{2}
\cdot \left( \frac{\sum_{uv \in E_G^{same}} \left|x_u^2 - x_v^2\right|
+ \sum_{uv \in E_G^{dif}} w_G(u, v) \left(x_u^2 + x_v^2\right)}{x^T L_H x} \right)\\
& \cdot \left( \frac{\sum_{uv \in E_G^{same}} \left|x_u^2 - x_v^2\right|
+ \sum_{uv \in E_G^{dif}} w_G(u, v) \left(x_u^2 + x_v^2\right)}{x^T L_{D_G} x} \right).
\end{eqnarray*}

By Lemma \ref{lem:abssqr}, we have that the first factor is bounded by
$\frac{1}{2} \phi(G, H)$  and the second factor bounded by
$\frac{1}{2} \phi(G, D_G)$. Hence we get
\begin{align}
\frac{x^TL_Gx}{x^TL_Hx}
& \geq \frac{1}{4} \phi(G, H) \phi(G,{D_G}).
\end{align}
\end{proof}

\section{Computation}

Note that in Lemma \ref{lem:abssqr} we actually proved that for any vector $x$ such that $x^Td =0$,
we can sort $x$ and find $n-1$ sets $S_i\subseteq V$ such that:
$$
    \min_i \frac{ cap_G(S_i,\bar{S_i})}{cap_H(S_i,\bar{S_i})} \leq \frac{1}{{\phi(G,D_G)}}\cdot \frac{x^T L_G x}{x^T L_H x}.
$$
This suggests that we can find a cut which is at most $1/\phi(G,D_G)$ larger than the ratio $(x^TL_Gx/x^TL_Hx).$

Given any positive definite matrix $A$, one can use the inverse power iteration $y_{i+i} = A^{-1} y_i$, where $y_0$ is a random vector, to find a vector $x$ such that
\begin{equation} \label{eq:approxvector}
     \frac{x^T A x}{x^T x} \leq (1+\epsilon) \lambda_{\min}(A).
\end{equation}
The number of rounds required for this is $O(\log n / \epsilon)$; for a proof  see~\cite{spielman-2006}.

Analogously, given a pair of positive definite matrices $(A,B)$, one can perform power iteration with the  matrix $A^{-1}B$ to find a vector $x$ such that
$$
   \frac{x^T A x}{x^T B x} \leq (1+\epsilon) \lambda_{\min}(A,B).
$$
The proof is similar to the simple eigenvalue problem case, using only the additional fact that the generalized eigenvectors of the pair $(A^{-1},B^{-1})$ are the usual eigenvectors of the matrix $A^{-1}B$, in addition with the fact
that the eigenvectors are $A$-orthogonal and $B$-orthogonal~\cite{Stewart.Sun}. Note that the iteration can be extended to the case when $A$ has a known null space, by simply operating on vectors orthogonal to the null space.

Note now that $A^{-1}By_i$ can be implemented as a linear system solve $Az = By_i$. Instead of solving exactly a linear system with the Laplacian, one can use a more efficient iterative solver instead, and ask for a solution $\tilde{z}$ that satisfies $||\tilde{z}-z||_{A} \leq (1+\epsilon/4) ||A^{-1}y_i||_{A}$. For  many iterative linear system solvers, including the fast Laplacian solvers, this step is an implicit multiplication with a matrix $\tilde{A}^{-1}$ which is spectrally close to $A^{-1}$. Spielman and Teng~\cite{spielman-2006} observe that this is sufficient for the computation of an approximate eigenvector that satisfies inequality~\ref{eq:approxvector}.This extends to the generalized problem with Laplacians as well. Finally, a little more care has to be taken for the case of Laplacian solvers that are randomized. In that case $O(\log(1/p))$ different runs of the inverse power method are need to get a good approximate eigenvector with probability at least~$1-p$.

\bibliographystyle{alpha}

\end{document}